\def\C{\mathbb{C}} 
\def\R{\mathbb{R}} 
\def\wp1{\mathrm{w.p.} 1}  
\def\Prob{\mathbb{P}}
\def\Exp{\mathbb{E}} 
\def\kS{$k$-space}
\newtheorem{thm}{Theorem}
\newtheorem{prop}{Proposition}
\newtheorem{lemma}{Lemma}
\newtheorem{remark}{\it {Remark}}
\def\cc#1{\setlength{\tabcolsep}{0pt}\begin{tabular}{c}#1\end{tabular}}
\def\yaxis#1{\cc{\rotatebox{90}{{\small #1}}}}
\begin{document}

\title{From variable density sampling to continuous sampling using Markov chains}

\author{\IEEEauthorblockN{Nicolas Chauffert, Philippe Ciuciu}
\IEEEauthorblockA{CEA, NeuroSpin center, \\
INRIA Saclay, PARIETAL Team \\
145, F-91191 Gif-sur-Yvette, France\\
Email: firstname.lastname@inria.fr}
\and
\IEEEauthorblockN{Pierre Weiss}
\IEEEauthorblockA{ITAV, USR 3505\\
Toulouse, France\\
Email: pierre.weiss@itav-recherche.fr}
\and
\IEEEauthorblockN{Fabrice Gamboa}
\IEEEauthorblockA{Universit\'e de Toulouse; CNRS\\ IMT-UMR5219\\F-31062 Toulouse, France\\
Email: fabrice.gamboa@math.univ-toulouse.fr}}

\maketitle

\begin{abstract}
Since its discovery over the last decade, Compressed Sensing (CS) has been successfully applied
to Magnetic Resonance Imaging (MRI). It has been shown to be a powerful way to reduce scanning time without sacrificing image quality. MR images are actually strongly compressible in a wavelet basis, the latter being largely incoherent with the $k$-space or spatial Fourier domain where acquisition is performed. Nevertheless, since its first application to MRI~\cite{Lustig07}, the theoretical justification of actual $k$-space sampling strategies is questionable. Indeed, the vast majority of $k$-space sampling distributions have been heuristically designed~(e.g., variable density) or driven by experimental feasibility considerations~(e.g., random radial or spiral sampling to achieve smoothness $k$-space trajectory). In this paper, we try to reconcile very recent CS results with the MRI specificities~(magnetic field gradients) by enforcing the measurements, i.e. samples of $k$-space, to fit continuous trajectories. 
To this end, we propose random walk continuous sampling based on Markov chains and we compare the reconstruction quality of this scheme to the state-of-the art.
\end{abstract}


\section{Introduction}

Compressed Sensing \cite{Candes06,Donoho06} is a theoretical framework which gives guarantees to recover sparse signals (signals reprensented by few non-zero coefficients in a given basis) from a limited number of linear projections.
In some applications, the measurement basis is fixed and the projections should be selected amongst a fixed set. 
For instance, in MRI, the signal is sparse in the wavelet basis, and the sampling is performed in the spatial~(2D or 3D) Fourier basis (called \kS). Possible measurements are then projections on the lines of matrix $A=F^* \Psi$, where $F^*$ and $\Psi$ denote the Fourier and inverse wavelet transform, respectively. 

Recent results~\cite{Rauhut10,Candes11} give bounds on the number of measurement $m$ needed to exactly recover $s$-sparse signals in $\C^n$ or $\R^n$ in the framework of bounded orthogonal systems. The authors have shown that for a given $s$-sparse signal, the number of measurements needed to ensure its perfect recovery is $O(s \log(n))$. This methodology, called \textit{variable density sampling}, involves an independent and identically distributed~(iid) random drawing and has already given promising results in reconstruction simulations~\cite{Lustig07,Puy11}.  Nevertheless, in real MRI, such sampling patterns cannot be implemented, because of the limited speed of magnetic field gradient commutation. Hardware constraints require at least continuity of the sampling trajectory, which is not satisfied by two-dimensional iid sampling. In this paper, we introduce a new Markovian sampling scheme to enforce continuity. Our approach relies on the following reconstruction condition introduced by Juditski, Karzan and Nemirovki~\cite{Juditsky11}:
\begin{thm}[\cite{Juditsky11}]
If $A$ satisfies:
\begin{equation*}
\gamma(A)=\min_{Y\in \R^{n\times m}} \|I_n-Y^T A\|_\infty < \frac{1}{2 s}.
\end{equation*}  
All $s$-sparse signals $x \in \R^n$ are recovered exactly by solving:
\begin{equation}
\label{eq:minL1}
\underset{A_m w=A_m x}{\operatorname{argmin}} \ \|w\|_1
\end{equation}
\end{thm}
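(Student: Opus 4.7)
The plan is a proof by contradiction: assuming that the error vector $h := \hat{x} - x$ associated to a minimiser $\hat{x}$ of~\eqref{eq:minL1} is non-zero, I combine the near-inverse $Y$ supplied by the hypothesis with the classical cone/null-space argument on the support $S$ of $x$ to derive the strict inequality $\|h\|_1 < \|h\|_1$.

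First I fix an $s$-sparse target $x$ with support $S$ of cardinality at most $s$, pick any minimiser $\hat{x}$, and observe that feasibility $A\hat{x} = Ax$ forces $h$ to lie in the kernel of the measurement operator, so it suffices to rule out $h \neq 0$.

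Second, the hypothesis delivers a matrix $Y$ for which $M := I_n - Y^T A$ has entrywise max-norm strictly below $1/(2s)$. Since $Ah = 0$, one gets $h = Mh$, and row-by-row H\"older yields
\[
\|h\|_\infty \;=\; \|Mh\|_\infty \;\leq\; \|M\|_\infty\,\|h\|_1 \;<\; \frac{1}{2s}\|h\|_1 .
\]
Here $\|M\|_\infty$ has to be read as $\max_{i,j}|M_{ij}|$ (the only convention compatible with the theorem), which is also the mild technical point requiring care together with reconciling the stated dimensions of $Y$ so that $Y^T A \in \R^{n \times n}$.

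Third, I invoke the standard $\ell_1$ cone argument. Splitting $h = h_S + h_{S^c}$ and using $\|x+h\|_1 \geq \|x\|_1 - \|h_S\|_1 + \|h_{S^c}\|_1$ together with $\|\hat{x}\|_1 \leq \|x\|_1$ gives $\|h_{S^c}\|_1 \leq \|h_S\|_1$, hence
\[
\|h\|_1 \;\leq\; 2\|h_S\|_1 \;\leq\; 2s\,\|h_S\|_\infty \;\leq\; 2s\,\|h\|_\infty .
\]
Chaining this with the previous display yields $\|h\|_1 < \|h\|_1$, which is impossible unless $h = 0$, so $\hat{x} = x$.

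I do not foresee any genuine obstacle: the argument is a clean composition of a dual-certificate-style bound coming from the hypothesis and the textbook cone/null-space argument from compressed sensing. The only real care needed is the matrix-norm convention discussed above; once that is fixed, the two inequalities above immediately close on each other.
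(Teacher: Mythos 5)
The paper imports this theorem from \cite{Juditsky11} without proof, so there is no internal argument to compare against; your proof is correct and is essentially the standard null-space-property derivation underlying the cited reference (for a nonzero kernel vector $h=(I_n-Y^TA)h$, chain $\|h\|_1\leq 2\|h_S\|_1\leq 2s\,\|h\|_\infty<\|h\|_1$). Your two technical caveats are also right: $\|\cdot\|_\infty$ must be read as the entrywise max norm (the only convention consistent with the rest of the paper, e.g.\ $\|Z_l\|_\infty=L$), and the stated dimensions of $Y$ are a typo --- $Y$ should lie in $\R^{m\times n}$ so that $Y^TA\in\R^{n\times n}$, as in Section~\ref{part:2}.
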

\noindent which can be seen as an alternative to the \textit{mutual coherence}~\cite{Donoho06}. 
We will show that this criterion makes it possible to obtain theoritical guarantees on the number of measurements necessary to reconstruct $s$ sparse signals, using variable density sampling or markovian sampling. Unfortunately the bounds we obtain are in $O(s^2)$. This phenomenon is due to the \textit{quadratic bottleneck} described in \cite{Rauhut10}. We are currently trying to obtain $O(s)$ results using different proof strategies.

\subsection*{Notation}

A signal $x \in \R^n$ is said to be $s$-sparse if it has at most $s$ non-zero coefficients. $x$ is measured through the acquisition system represented by a matrix $A_0$. Downsampling the measurements consists of deriving a matrix $A$ composed of $m$ lines of $A_0$ and observing $y=Ax \in \R^m$. 

\section{\label{part:2}Theoretical result}

\subsection{Independent Sampling}
We aim at finding $A_m \in \R^{m \times n}$ composed of $m$ rows of $A$, and $Y_m \in \R^{m\times n}$ such that $\|I_n-Y_m^T A_m \|_\infty < \frac{1}{2 s}$, for a given positive integer $s$. 
Following \cite{Juditsky11b}, we set $\Theta_i=\frac{a_i a_i^T}{\pi_i}$ and use the decomposition $I_n = A^TA = \sum_{i=1}^{n} \pi_i \Theta_i$.
We consider a sequence of $m$ random i.i.d. matrices $Z_1 , \dots, Z_m$, taking value $\Theta_i$ with probability $\pi_i$.
We set $\pi_i = \|a_i\|_\infty^2/L$, where $L = \sum_{i=1}^n \|a_i\|_\infty^2$, so that $\|Z_l\|_\infty$ is equal to $L$. 
Let us denote $W_m = \frac{1}{m} \sum_{l=1}^m Z_l$. Then $W_m$ may be written as $Y_m^T A_m$.
\begin{lemma}
\label{prop:indep}
$\forall t >0$ \\
\begin{equation}
\label{eq:conc_indep}
\Prob(\|I_n - W_m\|_\infty >t) \leq n (n+1) \exp \Bigl(- \frac{m t^2}{2 L^2 + 2 Lt/3} \Bigr).
\end{equation}
\end{lemma}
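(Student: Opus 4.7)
The plan is to treat $\|I_n - W_m\|_\infty$ as an entrywise maximum and reduce the concentration statement to a scalar Bernstein inequality applied to each entry, followed by a union bound.

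First I would observe that the construction $\pi_i \Theta_i = a_i a_i^T$ and $I_n = A^T A = \sum_i a_i a_i^T$ give $\Exp[Z_l] = \sum_i \pi_i \Theta_i = I_n$, so that $\Exp[W_m] = I_n$ and each scalar entry $(I_n - W_m)_{i,j} = \frac{1}{m}\sum_{l=1}^m (\delta_{ij} - (Z_l)_{ij})$ is a centered average of i.i.d.\ bounded variables. Since each $Z_l$ (and hence $W_m$) is symmetric, it is enough to union-bound over the $n(n+1)/2$ pairs $(i,j)$ with $i \leq j$; together with the factor $2$ produced by the two-sided Bernstein bound on each entry, this explains the combinatorial prefactor $n(n+1)$ in the statement.

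Next I would verify the two quantitative inputs that Bernstein's inequality requires. For the range, the choice $\pi_i = \|a_i\|_\infty^2/L$ yields $|\Theta_{i,jk}| = |(a_i)_j (a_i)_k|/\pi_i \leq \|a_i\|_\infty^2/\pi_i = L$, so $|(Z_l)_{jk} - \delta_{jk}| \leq L$ (using $L \geq 1$, and for the diagonal entries the fact that $0 \leq (Z_l)_{jj} \leq L$). For the variance I would use
\begin{equation*}
\Exp[(Z_l)_{jk}^2] = \sum_i \pi_i \Theta_{i,jk}^2 \leq L \sum_i \pi_i |\Theta_{i,jk}| = L \sum_i |(a_i)_j (a_i)_k| \leq L \cdot L = L^2,
\end{equation*}
where the last inequality comes from $|(a_i)_j (a_i)_k| \leq \|a_i\|_\infty^2$ and $\sum_i \|a_i\|_\infty^2 = L$. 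Hence the per-summand variance is at most $L^2$ and the range bound is $L$.

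Then I would apply Bernstein's inequality to each fixed entry: for centered i.i.d.\ variables $X_l$ with $|X_l| \leq K$ and $\Exp[X_l^2] \leq \sigma^2$, one has $\Prob\left(|\tfrac{1}{m}\sum_l X_l|>t\right) \leq 2\exp\left(-mt^2/(2\sigma^2 + 2Kt/3)\right)$. Substituting $K=L$ and $\sigma^2 = L^2$ produces exactly the exponent in~\eqref{eq:conc_indep}, and the union bound over the $n(n+1)/2$ independent entries yields the prefactor $n(n+1)$.

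I do not anticipate a real obstacle: the only delicate point is to be careful that the bounds on the range and variance indeed follow from the specific choice of the sampling law $\pi_i \propto \|a_i\|_\infty^2$, which is what makes the same constant $L$ appear in both quantities and allows the exponent to be written cleanly as $mt^2/(2L^2 + 2Lt/3)$.
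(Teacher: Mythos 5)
Your proof is correct and follows essentially the same route as the paper's: an entrywise application of Bernstein's inequality to the centered bounded variables $(I_n - Z_l)^{(a,b)}$, followed by a union bound over the $n(n+1)/2$ entries of the symmetric matrix $I_n - W_m$, with the factor $2$ from the two-sided bound producing the prefactor $n(n+1)$. Your explicit verification of the range and variance bounds (including the role of $L \geq 1$ for the diagonal entries) is somewhat more detailed than the paper's, but the underlying argument is identical.
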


\begin{proof}
Bernstein's concentration inequality \cite{ledoux01} states that if $X_1, \dots, X_m$ are independent zero-mean random variables such that for all $i$, $|X_i| \leq \alpha$ and $\displaystyle \sigma^2= \sum_i \Exp  \left(X_i^2\right) < \infty$, then  $\forall t>0$
\begin{equation*}
\Prob \left( |\sum_{i=1}^{m} X_i| >t \right) \leq 2 \exp \left(- \frac{t^2}{2(\sigma^2+\alpha t/3)} \right).
\end{equation*}  
For $1\leq a, b \leq n$, let $M^{(a,b)}$ denote the $(a,b)$th entry of a matrix $M \in \R^{n \times n}$. The random variable $(I_n-Z_l)^{(a,b)}$ is centered since $\sum_{i=1}^n \pi_i \Theta_i = I_n$.
Moreover, $|(I_n-Z_l)^{(a,b)}| \leq L$. Applying Bernstein's inequality to the sequence $\frac{1}{m}\left((I_n-Z_l)^{(a,b)}\right)_{1\leq l\leq m}$ gives 
\begin{equation*}
\Prob \left( |(I_n-W_m)^{(a,b))} | >t\right) \leq 2 \exp \left(- \frac{m t^2}{2 L^2 + 2Lt/3 } \right).
\end{equation*}
Finally, using a union bound and the symmetry property of matrix $(I_n-W_m)$, we get:
\begin{equation}
\label{eq:ineg_inf}
\Prob \left(\|I_n-W_m\|_\infty > t\right) \leq \sum_{1\leq a \leq b \leq n} \Prob \left(|I_n-W_m|^{(a,b)} > t\right).
\end{equation}
Since $\Prob \left(|I_n-W_m|^{(a,b)} > t\right)$ is independent of $(a,b)$, we obtain Eq.~\eqref{eq:conc_indep}.
\end{proof}

\begin{remark}
Setting $t=4 L \sqrt{\frac{2\ln (2 n^2)}{m}}$ in lemma~\ref{prop:indep}, the bound given by Juditsky et al. in \cite{Juditsky11b} is $\Prob \left( \|I_n-W_m\|_\infty \geq t \right) \leq \frac{1}{2}$. This bound is obtained by upper-bounding the mean of $\|I_n-W_m\|_\infty$ and using Markov inequality.
Setting the same $t$ value in Eq.~\eqref{eq:conc_indep}, and assuming $t\leq L$, we obtain $\Prob \left( \|I_n-W_m\|_\infty \geq t \right) \leq \frac{1}{2 n^{4}}$.
This huge difference comes from inability of Markov inequality to capture large deviations behaviors. 
\end{remark}
From lemma~\ref{prop:indep}, we can derive the immediate following result by setting $t=1/2s$:

\begin{prop}
Let $A_m$ be a measurement matrix designed by drawing $m$ lines of $A$ under the distribution $\pi$. Then, with probability $1-\eta$, if
\begin{equation}
m \geqslant 5 L^2 s^2 \log(n^2/\eta),
\end{equation}
every $s$-sparse signal $x$ is the unique solution of the $\ell_1$ problem:
\begin{equation*}
\underset{A_m w=A_m x}{\operatorname{argmin}} \ \|w\|_1
\end{equation*} 
\end{prop}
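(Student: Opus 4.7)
My plan is to chain Lemma~\ref{prop:indep} with the Juditsky--Karzan--Nemirovski recovery criterion from Theorem~1, choosing the concentration parameter $t$ to match the threshold $1/(2s)$ dictated by that criterion.

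First I would connect the matrix $W_m$ from the construction preceding Lemma~\ref{prop:indep} to the quantity $\gamma(A_m)$ controlled by Theorem~1. Since $W_m = Y_m^T A_m$ by construction, one has the pointwise inequality $\gamma(A_m) \leq \|I_n - W_m\|_\infty$ (the infimum in the definition of $\gamma$ is bounded by the value at the particular $Y_m$ produced by the sampling). Hence, on the event $\{\|I_n - W_m\|_\infty < 1/(2s)\}$, Theorem~1 applies verbatim and guarantees that every $s$-sparse signal is the unique $\ell_1$-minimizer in~\eqref{eq:minL1}. It therefore suffices to show that the stated lower bound on $m$ forces this event to occur with probability at least $1-\eta$.

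Next I would instantiate Lemma~\ref{prop:indep} at $t = 1/(2s)$, which yields
\begin{equation*}
\Prob\!\left(\|I_n - W_m\|_\infty > \tfrac{1}{2s}\right) \leq n(n+1)\exp\!\left(-\frac{m}{8 L^2 s^2 + \tfrac{4}{3}Ls}\right).
\end{equation*}
Requiring the right-hand side to be at most $\eta$ and taking logarithms produces the sufficient condition $m \geq \bigl(8L^2 s^2 + \tfrac{4}{3}Ls\bigr)\log(n(n+1)/\eta)$, which already has the correct shape: quadratic in $s$, quadratic in $L$, and logarithmic in $n$.

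The only remaining step is cosmetic tidying into the form $5 L^2 s^2 \log(n^2/\eta)$ stated in the proposition, and this is where I expect the only mild obstacle to lie. Absorbing the linear-in-$s$ term $\tfrac{4}{3}Ls$ into the quadratic term is legitimate whenever $Ls$ is bounded below by a fixed constant, and replacing $\log(n(n+1)/\eta)$ by $\log(n^2/\eta)$ costs only a harmless constant. The constant $5$ is however slightly tighter than what a naive substitution yields, which suggests that the bookkeeping exploits the regime $t \leq L$ highlighted in the Remark following Lemma~\ref{prop:indep}: there the Bernstein denominator $2L^2 + 2Lt/3$ can be collapsed to its leading term, shaving off the contribution of $\tfrac{4}{3}Ls$ entirely. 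That is the single place where I would be careful with the numerical constant; the structural argument is otherwise a direct application of Lemma~\ref{prop:indep} followed by Theorem~1.
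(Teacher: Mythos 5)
Your overall route is exactly the paper's: the authors give no separate proof and simply state that the proposition follows from Lemma~\ref{prop:indep} by setting $t=1/(2s)$, and your bridge $\gamma(A_m)\leq\|I_n-W_m\|_\infty$ (so that Theorem~1 applies on the event $\{\|I_n-W_m\|_\infty<1/(2s)\}$) is the right way to make that implication explicit. Your instantiation of the lemma is also computed correctly: it yields the sufficient condition $m \geq \bigl(8L^2s^2+\tfrac{4}{3}Ls\bigr)\log\bigl(n(n+1)/\eta\bigr)$.

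The one step that does not work is your proposed repair of the constant. Collapsing the Bernstein denominator $2L^2+2Lt/3$ to its leading term $2L^2$ replaces the tail bound $\exp\bigl(-\tfrac{mt^2}{2L^2+2Lt/3}\bigr)$ by the \emph{smaller} quantity $\exp\bigl(-\tfrac{mt^2}{2L^2}\bigr)$, i.e.\ it asserts a strictly stronger concentration than Lemma~\ref{prop:indep} provides; the legitimate use of $t\leq L$ goes in the opposite direction ($2L^2+2Lt/3\leq \tfrac{8}{3}L^2$, which \emph{worsens} the constant to $32/3$). Moreover, even the illegitimate collapse would give $8L^2s^2$, not $5L^2s^2$, so no bookkeeping recovers the stated constant $5$ from Lemma~\ref{prop:indep}. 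The discrepancy lies in the proposition as printed rather than in your argument: the honest conclusion of this proof strategy is $m\geq (8L^2s^2+\tfrac{4}{3}Ls)\log(n(n+1)/\eta)$ (or, absorbing the linear term via $Ls\geq 1$, $m\geq \tfrac{28}{3}L^2s^2\log(2n^2/\eta)$). You should state that bound and flag the constant $5$ as unobtainable by this route, rather than try to massage the arithmetic into matching it.
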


\subsection{Markovian sampling}

Sampling patterns obtained using the strategy presented in Section \ref{part:2} are not usable for many practical devices. 
A common constraint met on many hardwares (e.g. MRI) is the proximity of successive measurements. 
A simple way to model dependence between successive samples consists of introducing a Markov chain $X_1 \dots X_m$ on the set $\{1, \dots, n\}$ that represents locations of possible measurements. The transition probability to go from location $i$ to location $j$ is positive if and only if sampling $i$ and $j$ successively is possible. We denote $W_m=\frac{1}{m}\sum_{l=1}^{m} \Theta_{X_l}$.\\
In order to use a concentration inequality, $W_m$ should satisfy $\Exp  \left(W_m\right)=I_n$. We thus need (i) to set the stationary distribution of the Markov chain to $\pi$ and (ii) to set up the chain with its stationnary distribution $\pi$. These two conditions ensure that the marginal distribution of the chain is $\pi_i$ at any time. The issue of designing such a chain is widely studied in the frame of Markov chain Monte Carlo (MCMC) algorithms. 

A simple way to build up the transition matrix $P= (P_{ij})_{1\leq i,j \leq n}$ is the Metropolis
algorithm~\cite{hastings1970montecarlo}. Let us now recall a concentration inequality for finite-state Markov chains~\cite{Lezaud98}.
\begin{thm}
\label{thm:Lezaud}
Let $(P,\pi)$ be an irreductible and reversible Markov chain on a finite set G of size $n$. Let $f:G \rightarrow \mathbb{R}$ be such that $\sum_{i=1}^n\pi_i f_i = 0 , \, \|f\|_\infty \leq 1$ and $0< \sum_{i=1}^n f_i^2 \pi_i \leq b^2$. Then, for any initial distribution $q$, any positive integer $m$ and all $0< t\leq 1$,
\begin{equation*}
\Prob \Bigl(\frac{1}{m} \sum_{i=1}^m f(X_i) \geq t \Bigr) \leq e^{\frac{\epsilon(P)}{5}} N_q \exp \Bigl(- \frac{m t^2 \epsilon(P)}{4 b^2(1+h(5 t/b^2))} \Bigr)
\end{equation*}
where $N_q=(\sum_{i=1}^n (\frac{q_i}{\pi_i})^2 \pi_i)^{1/2}$, $\beta_1(P)$ is the second largest eigenvalue of $P$, and $\epsilon(P)=1-\beta_1(P)$ is the spectral gap of the chain. Finally $h$ is given by $h(x)=\frac{1}{2}(\sqrt{1+x} - (1-x/2))$.
\end{thm}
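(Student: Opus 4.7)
The plan is to combine a Chernoff exponential bound with a spectral analysis of a perturbed transition operator, along the lines of Gillman and Lezaud. First, I would apply the standard exponential Markov inequality: for any $\lambda>0$,
\begin{equation*}
\Prob_q\Bigl(\frac{1}{m}\sum_{i=1}^m f(X_i) \geq t\Bigr) \leq e^{-\lambda m t}\, \Exp_q\Bigl[\exp\Bigl(\lambda\sum_{i=1}^m f(X_i)\Bigr)\Bigr],
\end{equation*}
so that the remaining work reduces to controlling the exponential moment and optimizing $\lambda$ at the end.

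The second step is to recast this moment generating function as a matrix product. Setting $D_\lambda=\diag(e^{\lambda f(1)},\dots,e^{\lambda f(n)})$, one checks that $\Exp_q[\exp(\lambda\sum_i f(X_i))] = q^\top D_\lambda(PD_\lambda)^{m-1}\un$. Reversibility plays a key role here: the conjugate operator $\Pi^{1/2}D_\lambda^{1/2}PD_\lambda^{1/2}\Pi^{-1/2}$ with $\Pi=\diag(\pi)$ is self-adjoint on $\R^n$, has real spectrum, and its top eigenvalue $\rho(\lambda)$ governs the growth of the product. A Cauchy--Schwarz step in $L^2(\pi)$ against the initial distribution $q$ then extracts the prefactor $N_q=(\sum_i (q_i/\pi_i)^2 \pi_i)^{1/2}$ and yields
\begin{equation*}
\Exp_q\Bigl[\exp\Bigl(\lambda\sum_{i=1}^m f(X_i)\Bigr)\Bigr] \leq N_q\, \rho(\lambda)^m.
\end{equation*}

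The heart of the argument is a sharp estimate of $\rho(\lambda)$ by perturbation theory around $\lambda=0$, where $\rho(0)=1$ is a simple eigenvalue of $P$ separated from the remaining spectrum by the gap $\epsilon(P)$. The hypothesis $\sum_i \pi_i f_i=0$ kills the first-order term of the expansion, the hypothesis $\sum_i \pi_i f_i^2\leq b^2$ controls the second-order curvature, and $\|f\|_\infty\leq 1$ is used to dominate the remainder as a convergent resolvent series in $\lambda/\epsilon(P)$. A Kato-type expansion should give a bound of the form $\log\rho(\lambda)\leq \lambda^2 b^2/(\epsilon(P)(1+\text{correction}))$ up to a small additive constant of size $O(\epsilon(P))$. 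Plugging this into the Chernoff inequality and choosing $\lambda$ to balance $\lambda m t$ against the quadratic term produces the stated exponential rate, with the precise function $h(x)=\tfrac{1}{2}(\sqrt{1+x}-(1-x/2))$ coming out of the exact optimization.

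The main obstacle will be the perturbation bound on $\rho(\lambda)$. Keeping the coefficient $b^2$ rather than $\|f\|_\infty^2$ in the leading order, while retaining $\epsilon(P)$ as the effective rate parameter, requires carefully splitting the perturbation into its projection onto the Perron eigenspace and its orthogonal complement in $L^2(\pi)$, and summing a resolvent expansion controlled by the spectral gap on the complement. Extracting the precise constants $e^{\epsilon(P)/5}$ and the exact function $h$, rather than a looser Hoeffding- or Bennett-type substitute, is what makes the argument delicate; the other steps are routine once this spectral estimate is in hand.
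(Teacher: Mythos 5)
The paper does not prove this statement at all: it is quoted verbatim from Lezaud's paper \cite{Lezaud98} (``Let us now recall a concentration inequality for finite-state Markov chains''), so there is no internal proof to compare against. Measured against the actual proof in that reference, your outline identifies the correct strategy --- exponential Chernoff bound, rewriting of the moment generating function as $q^\top D_\lambda (PD_\lambda)^{m-1}\un$, symmetrization via reversibility in $L^2(\pi)$, a Cauchy--Schwarz step producing the prefactor $N_q$, and a Kato-type perturbation expansion of the top eigenvalue $\rho(\lambda)$ around $\lambda=0$ using the spectral gap. All of these ingredients are indeed present in Lezaud's argument, and the roles you assign to the three hypotheses ($\pi(f)=0$ killing the first-order term, $\pi(f^2)\leq b^2$ controlling the second order, $\|f\|_\infty\leq 1$ taming the remainder) are the right ones.

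However, as a proof this has a genuine gap, and you have located it yourself: the entire quantitative content of the theorem --- the factor $e^{\epsilon(P)/5}$, the denominator $4b^2(1+h(5t/b^2))$, and the specific function $h(x)=\tfrac{1}{2}(\sqrt{1+x}-(1-x/2))$ --- lives in the perturbation bound on $\rho(\lambda)$, which you only assert (``a Kato-type expansion should give a bound of the form \dots''). That step requires summing an explicit resolvent series for the reduced resolvent on the orthogonal complement of the Perron eigenvector, bounding each coefficient of the eigenvalue expansion by a combination of $b^2$, $\|f\|_\infty$ and powers of $\epsilon(P)^{-1}$, verifying convergence for $\lambda$ in an explicit range depending on $\epsilon(P)$, and only then optimizing $\lambda$; the function $h$ arises from solving the resulting quadratic in the optimal $\lambda$. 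None of this is routine bookkeeping --- it occupies the bulk of Lezaud's paper --- so the proposal is an accurate roadmap of the known proof rather than a proof. If your intent was merely to justify importing the theorem, a citation suffices, as the authors themselves do; if you intend to prove it, the spectral estimate on $\rho(\lambda)$ must actually be carried out.
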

Using this theorem, we can guarantee the following control of the term $\|I_n - W_m\|_\infty$:\\
\begin{lemma}
$ \forall\  0<t\leq1$,
\begin{equation}
\label{eq:conc_markov}
 \Prob \left(\|I_n - W_m\|_\infty \! \geq t \right) \! \leq \! n (n+1) e^{\frac{\epsilon(P)}{5}} \! \exp \Bigl(\!- \frac{mt^2 \epsilon(P)}{12L^2}\Bigr).
\end{equation}
\end{lemma}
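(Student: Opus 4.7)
The plan is to mirror the proof of Lemma~\ref{prop:indep}, but replace Bernstein's inequality by Theorem~\ref{thm:Lezaud}. For each pair $(a,b)\in\{1,\dots,n\}^2$, I would introduce the scalar function $f^{(a,b)}:\{1,\dots,n\}\to\R$ defined by $f^{(a,b)}(i)=(I_n-\Theta_i)^{(a,b)}$, so that $(I_n-W_m)^{(a,b)}=\frac{1}{m}\sum_{l=1}^m f^{(a,b)}(X_l)$. The identity $\sum_i \pi_i\Theta_i=I_n$ gives the required centering $\sum_i \pi_i f^{(a,b)}(i)=0$.

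Next I would verify the remaining hypotheses of Theorem~\ref{thm:Lezaud}. Since $|\Theta_i^{(a,b)}|=|a_i^{(a)}a_i^{(b)}|/\pi_i\leq \|a_i\|_\infty^2/\pi_i = L$ and $|I_n^{(a,b)}|\leq 1\leq L$, we have $\|f^{(a,b)}\|_\infty\leq 2L$. Rescaling by $g=f^{(a,b)}/(2L)$ makes $\|g\|_\infty\leq 1$, and the variance proxy $b^2:=\sum_i \pi_i g_i^2\leq 1$ follows trivially. Because we seed the chain at its stationary distribution $\pi$, the quantity $N_q$ appearing in the theorem satisfies $N_q=1$.

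Applying Theorem~\ref{thm:Lezaud} to $g$ with tail parameter $t/(2L)$, and then to $-g$ to control the two-sided deviation, gives
\begin{equation*}
\Prob\!\left(\left|(I_n-W_m)^{(a,b)}\right|\geq t\right)\leq 2\, e^{\epsilon(P)/5}\exp\!\Bigl(-\tfrac{mt^2\epsilon(P)}{16L^2(1+h(5t/(2L)))}\Bigr).
\end{equation*}
Over the range $0<t\leq 1$ and for $L\geq 1$, the factor $h(5t/(2L))$ is controlled by a numerical constant, and routine simplification of $16L^2(1+h(\cdot))$ absorbs into the announced $12L^2$. Finally, a union bound over the $n(n+1)/2$ distinct entries of the symmetric matrix $I_n-W_m$ together with the factor $2$ from the two-sided tail produces the prefactor $n(n+1)$ in~\eqref{eq:conc_markov}.

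The main obstacle will be bookkeeping the constants: bounding $1+h(5t/(2L))$ uniformly on the range of interest and squeezing $16L^2(1+h(\cdot))$ into $12L^2$ requires either a slightly sharper variance estimate (using $\sum_i \pi_i(\Theta_i^{(a,b)})^2\leq L$, which follows from orthonormality of the rows of $A$) or a tighter version of Lezaud's bound. Everything else is a mechanical transcription of the independent-sampling proof, with the Markov spectral gap $\epsilon(P)$ playing the role of the independence factor and the initial-distribution term $N_q$ disappearing because the chain starts in stationarity.
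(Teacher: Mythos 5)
Your proposal follows essentially the same route as the paper: apply Theorem~\ref{thm:Lezaud} entrywise to the centered function $f^{(a,b)}$ and to its negative, use stationarity of the initial distribution to get $N_q=1$, and finish with a union bound over the $n(n+1)/2$ entries of the symmetric matrix $I_n-W_m$. The constant-bookkeeping obstacle you flag dissolves if you sharpen $\|f^{(a,b)}\|_\infty\leq 2L$ to $\|f^{(a,b)}\|_\infty\leq L$ (the diagonal entries of $\Theta_i$ are nonnegative and $L\geq 1$, so $|(I_n-\Theta_i)^{(a,b)}|\leq L$ for all $a,b$); normalizing by $L$ then gives a denominator $4L^2\bigl(1+h(5t/L)\bigr)\leq 4L^2\bigl(1+h(5)\bigr)<12L^2$ for $0<t\leq 1$, which is exactly the paper's argument (the paper actually normalizes by $1+L$, which strictly yields $12(1+L)^2$, so your version with the sharper sup-norm bound is if anything cleaner).
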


\begin{proof}
By applying Theorem~\ref{thm:Lezaud} to a function $f$ and then to its opposite $-f$, we get:

\begin{multline*}
\Prob \Bigl(\Bigl|\frac{1}{m} \sum_{i=1}^m f(X_i)\Bigr| \geq t \Bigr) \leq 2 e^{\frac{\epsilon(P)}{5}} N_q \\ \exp \Bigl(- \frac{m t^2 \epsilon(P)}{4 b^2(1+h(5 t/b^2))} \Bigr).
\end{multline*}

Then we set $f(X_i)=(I_n-\Theta_{X_i})^{(a,b)}/(1+L)$. 
The Markov chain is constructed such that $\sum_{i=1}^n\pi_i f(X_i)=0$. 
Since we have $\|f\|_\infty \leq 1$, $b=1$, and since $t\leqslant 1$, $1+h(5t)<3$.
Moreover, since the initial distribution is $\pi$, $q_i=\pi_i, \forall i$ and thus $N_q=1$. Again, resorting to a union bound (\ref{eq:ineg_inf}) enables us to extend the result for the $(a,b)$th entry to the whole infinite norm of the $n \times n$ matrix $I_n-W_m$~\eqref{eq:conc_markov}.\\ \end{proof}

Then we can quantify the number of measurements needed to ensure exact recovery:
\begin{prop}
\label{prop:measurements_needed}
Let $A_m$ be a measurement matrix designed by drawing $m$ lines of $A$ under the Markovian process described above. Then, with probability $1-\eta$, if
\begin{equation}
\label{eq:measurements}
m \geqslant \frac{12 L^2}{\epsilon(P)} s^2 \log(2n^2/\eta),
\end{equation}
every  $s$-sparse signal $x$ is the unique solution of the $\ell_1$ problem:
\begin{equation*}
\underset{A_m w=A_m x}{\operatorname{argmin}} \ \|w\|_1
\end{equation*} 
\end{prop}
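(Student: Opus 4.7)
The plan is to combine Theorem 1 (the Juditsky--Karzan--Nemirovski criterion) with the Markovian concentration bound just established in Lemma 2. The factorization $W_m = Y_m^T A_m$ that appeared in the i.i.d.\ setting carries over verbatim to the chain case: take $A_m$ to be the $m\times n$ matrix whose $l$-th row is $a_{X_l}^T/\sqrt{m\pi_{X_l}}$ and set $Y_m = A_m$; then $A_m^T A_m = W_m$, so $\gamma(A_m) \leq \|I_n - W_m\|_\infty$. Consequently, it suffices to force this infinity norm below $1/(2s)$ with probability at least $1-\eta$, and Theorem 1 will then deliver exact $\ell_1$-recovery of every $s$-sparse $x$.

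Concretely, I would plug $t = 1/(2s)$ into Lemma 2 — the hypothesis $t\leq 1$ holds as soon as $s\geq 1$ — to get
\begin{equation*}
\Prob\Bigl(\|I_n - W_m\|_\infty \geq \tfrac{1}{2s}\Bigr) \leq n(n+1)\,e^{\epsilon(P)/5}\exp\Bigl(-\frac{m\,\epsilon(P)}{48\,L^2 s^2}\Bigr).
\end{equation*}
Then I would bound $n(n+1)\leq 2n^2$, absorb the benign factor $e^{\epsilon(P)/5}\leq e^{1/5}$ into the universal constant, require the right-hand side to be at most $\eta$, and solve for $m$. This produces a sufficient condition of the form $m\geq (CL^2s^2/\epsilon(P))\log(2n^2/\eta)$ for some numerical $C$; chasing the arithmetic should reproduce the factor stated in \eqref{eq:measurements} (the value $12$ is slightly tighter than the $48$ that falls out naively from Lemma 2, so I would revisit the Bernstein constants in Theorem 3 to see where the sharpening comes from).

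Since the two main ingredients are already in place, I do not expect a genuine obstacle — the proposition is essentially a one-line corollary of Theorem 1 and Lemma 2. The step I would check most carefully is the identification $W_m = A_m^T A_m$ in the Markovian regime, where the same row $a_i$ can be visited many times and one has to make sure the reweighting by $1/\sqrt{m\pi_{X_l}}$ remains consistent with the definition $\Theta_i = a_ia_i^T/\pi_i$ used in Lemma 2 and with the row-subsampling interpretation of $A_m$ required by Theorem 1. The quadratic dependence on $s$ is not an artefact of this final step but of the entrywise $\|\cdot\|_\infty$ route — the ``quadratic bottleneck'' flagged in the introduction.
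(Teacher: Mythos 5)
Your proposal matches the paper's (implicit) argument exactly: the paper states Proposition~2 as an immediate corollary of Lemma~2 with $t=1/(2s)$ combined with Theorem~1, without writing out a proof, and your factorization $W_m=A_m^TA_m$ giving $\gamma(A_m)\leq\|I_n-W_m\|_\infty$ is the same device the paper uses in the independent case (``$W_m$ may be written as $Y_m^TA_m$''). On the one point you flag: direct substitution really does yield $m\geq \frac{48L^2}{\epsilon(P)}s^2\log(2n^2/\eta)$ (after bounding $n(n+1)\leq 2n^2$ and absorbing $e^{\epsilon(P)/5}$), and there is no hidden sharpening to recover from Theorem~3 --- the stated constant $12$ appears simply to drop the factor $4$ arising from $t^2=1/(4s^2)$, a slip consistent with Proposition~1, whose constant $5$ likewise does not follow from a direct plug-in of Lemma~1 (which gives roughly $8L^2s^2$); so your derivation is the correct one and the discrepancy is in the paper, not in your argument.
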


\begin{remark}
The spectral gap $\epsilon(P)$ takes its value between 0 and 1 and describes the mixing properties of the Markov chain. 
The closer the spectral gap to 1, the fastest the convergence to the mean. \\
\end{remark}

\begin{remark}
All the results above can be extended to the complex case using a slightly different proof.
\end{remark}

\section{Results and discussion}

In order to cover a larger domain of $k$-space, we consider the following chain: $P^{(\alpha)}=(1-\alpha)P+ \alpha\tilde{P}$, where $\tilde{P}$ corresponds to an independent drawing $\tilde{P}_{ij}=\pi_j,\forall i,j$. This chain has $\pi$ as invariant distribution, and fulfills the continuity property while enabling a jump with probability of $\alpha$.

Weyl's Theorem~\cite{Horn91} ensures that $\epsilon(P^{(\alpha)}) > \alpha$. This bound is useful because of the dependence of $\epsilon(P)$ with respect to the problem dimension, which would have weakened condition~\eqref{eq:measurements}.

Sampling scheme obtained by these methods are composed of $1/\alpha$-average length random walks on the $k$-space. All our experiments consist of reconstructing a two-dimensional image from a sampled $k$-space by solving an $\ell_1$ minimization problem. Constrained $\ell_1$ minimization (Eq.~\eqref{eq:minL1}) is performed using the Douglas-Rachford algorithm~\cite{Combettes11b}. In each case, only twenty percent of the Fourier coefficients are kept, which corresponds to an acceleration factor of $r=5$. Since the schemes are obtained by a random process, we run each experiment 10 times independently, and compared the mean value of the reconstruction results in terms of \textit{Peak Signal-to-Noise Ratio} (PSNR).\\

In Fig.~\ref{fig:1}, it is shown that the image reconstruction quality degrades when $\alpha$ decreases. These results can be explained by the spatial confinement of the continuous parts of a given Markov chain, except for large values of $\alpha$. There seems to be a compromise between the number of discontinuities of the chain (linked to the hardware constraints in MRI) and the $k$-space coverage. Nevertheless, accurate reconstruction results can be observed with reasonable average mean length of connected subparts ($\alpha=0.01$ or $0.001$).  

The mixing properties of the chain (through its spectral gap) seem to have a strong impact on the quality of the scheme, as shown in Proposition~\ref{prop:measurements_needed}. Unfortunately, the spectral gap is strongly related to the problem dimension $n$ and can tend to zero if $n$ goes to infinity. This proves to be a theoretic limitation of this method. Nevertheless, we obtained reliable reconstruction results which cannot be explained by the proposed theory. Since the design process is based on randomness, we can even expose a specific scheme which provides accurate reconstruction results instead of considering the mean behavior (Fig.~\ref{fig:2}). We currently aim at deriving a stronger result on the number of measurements needed, involving a $O(s)$ bound. Meanwhile, we are developing second order chains which can ensure more regularity of the trajectories and for which we have already observed good reconstruction results (Fig.~\ref{fig:3}).

\begin{figure}
\begin{center}
\begin{tabular}{cc}
\yaxis{$k_y$} \includegraphics[width=.2\textwidth]{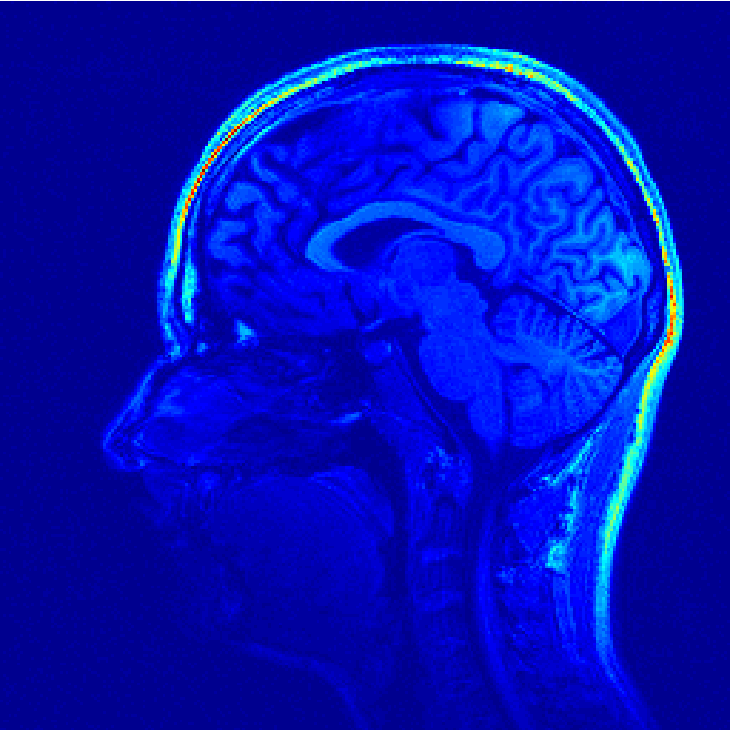} & \includegraphics[width=.2\textwidth]{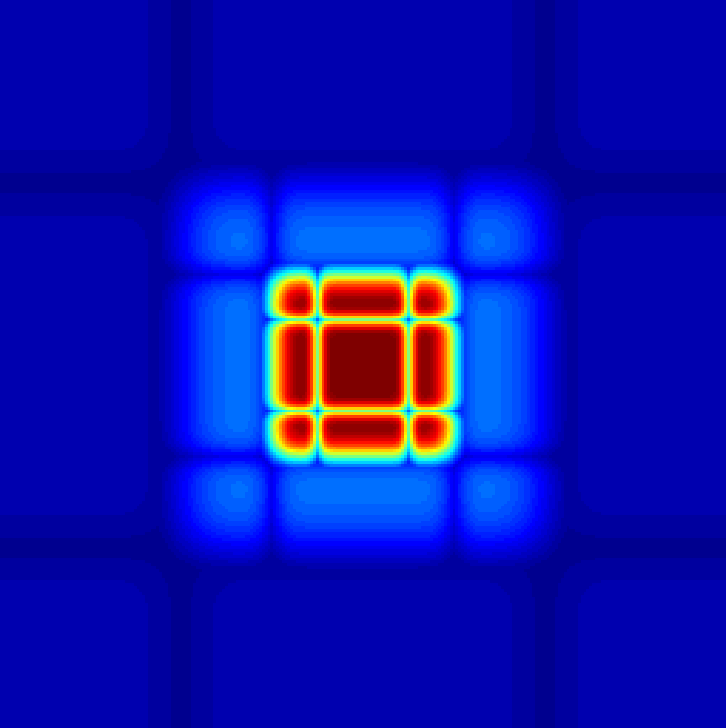} \includegraphics[height=.2\textwidth]{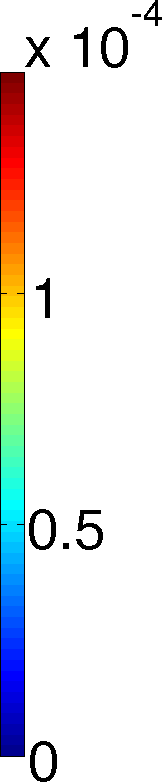}\\ [-4.2cm]
{\small (a)} & {\small (b)}\\[4.3cm]
\yaxis{$k_y$} \includegraphics[width=.2\textwidth]{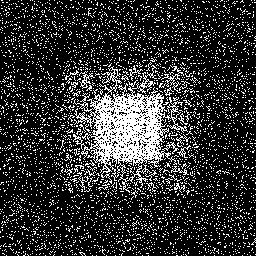} & \includegraphics[width=.2\textwidth]{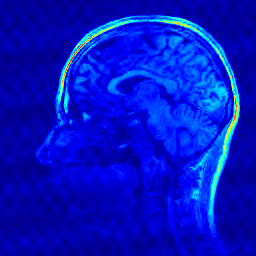}\\ [-4.2cm]
{\small (c) $\alpha=1$} & {\small (d) mean-PSNR=33.4dB}\\[4.3cm]
\yaxis{$k_y$} \includegraphics[width=.2\textwidth]{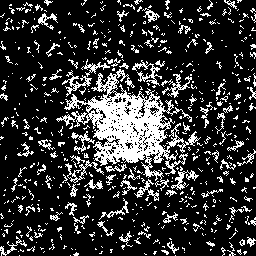} & \includegraphics[width=.2\textwidth]{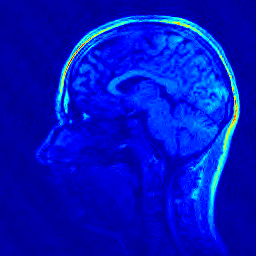}\\ [-4.2cm]
{\small (e) $\alpha=0.1$} & {\small (f) mean-PSNR=32.4dB}\\[4.3cm]
\yaxis{$k_y$} \includegraphics[width=.2\textwidth]{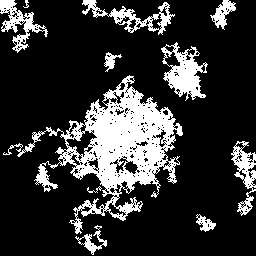} & \includegraphics[width=.2\textwidth]{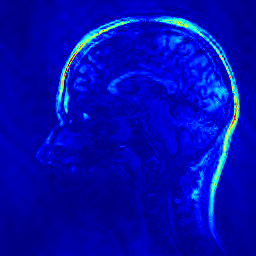}\\ 
 $k_x$ & \\[-4.6cm]
{\small (g) $\alpha=0.001$} & {\small (h) mean-PSNR=30.3dB}\\[4.3cm]
\end{tabular}\vspace*{-.5cm}
\end{center}
\caption{{\bf First line:} reference image used in our experiments (a) and $\pi$ distribution (b). {\bf Lines 2 to 4, left:} different sampling patterns (with an acceleration factor $r=5$). {\bf right:} reconstruction results. From line 2 to bottom: independent drawing from distribution $\pi$~(c), corresponding to $\alpha=1$. (e) (resp (g)) represents a sampling scheme designed with the presented markovian process with transition matrix $P^{(\alpha)}$ for $\alpha=0.1$) (resp. $\alpha=0.001$).\label{fig:1}}
\end{figure}

\begin{figure}[!h]
\begin{center}
\begin{tabular}{cc}
\yaxis{$k_y$} \includegraphics[width=.2\textwidth]{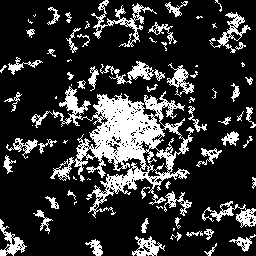} & \includegraphics[width=.2\textwidth]{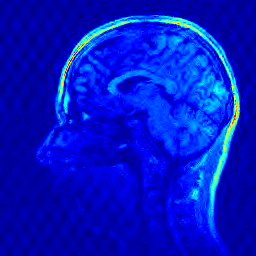}\\ 
 $k_x$ & \\[-4.6cm]
{\small (a) $\alpha=0.01$} & {\small (b) PSNR=34.2dB}\\[4.3cm]
\end{tabular}\vspace*{-.5cm}
\end{center}
\caption{Sampling scheme obtained setting $\alpha=0.01$ and $r=5$ (a) and its corresponding reconstructed image (b). \label{fig:2}}
\end{figure}

\begin{figure}[!h]
\vspace*{.3cm}
\begin{center}
\begin{tabular}{cc}
\yaxis{$k_y$} \includegraphics[width=.2\textwidth]{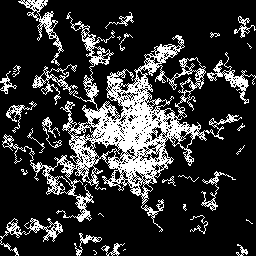} & \includegraphics[width=.2\textwidth]{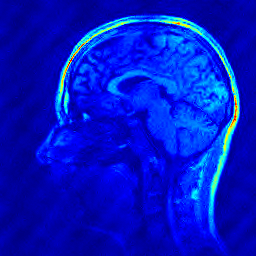}\\ 
 $k_x$ & \\[-4.6cm]
{\small (a) $\alpha=0.01$} & {\small (b) PSNR=33.4dB}\\[4.3cm]
\end{tabular}\vspace*{-.5cm}
\end{center}
\caption{Preliminary results for second order Markov chain: sampling scheme obtained setting $\alpha=0.01$ and $r=5$ (a) and its corresponding reconstructed image (b). \label{fig:3}}
\end{figure}

\section{Conclusion}
We proposed a novel approach combining compressed sensing and Markov chains to design continuous sampling trajectories, required for MRI applications. Our work may easily be extended to a 3D framework by considering a different neighbourhood of each $k$-space location. Existing continuous trajectories in CS-MRI only exploit 1D or 2D randomness for 2D or 3D $k$-space sampling, respectively. In the latter case, the points are randomly drawn in the plane defined by the partition and phase encoding directions so as to maintain continuous sampling in the orthogonal readout direction (frequency encoding). Here, the novelty relies both \newpage \noindent on the use of randomness in all $k$-space dimensions, and the establishment of compressed sensing results for continuous trajectories, based on a concentration result for Markov chains.

\section*{Acknowledgements}
We thank J\'er\'emie Bigot for the time he dedicated to our questions and his helpful remarks. The authors would like to thank the CIMI Excellence Laboratory for inviting Philippe Ciuciu on an excellence researcher position during winter 2013.

{\footnotesize
\bibliographystyle{IEEEbib}

\begin{thebibliography}{10}

\bibitem{Lustig07}
M.~Lustig, D.~L. Donoho, and J.~M. Pauly,
\newblock ``Sparse {MRI}: The application of compressed sensing for rapid {MR}
  imaging,''
\newblock {\em {{M}agn. {R}eson. {M}ed.}}, vol. 58, no. 6, pp. 1182--1195, Dec.
  2007.

\bibitem{Candes06}
E.~Cand{\`e}s, J.~Romberg, and T.~Tao,
\newblock ``Robust uncertainty principles: exact signal reconstruction from
  highly incomplete frequency information,''
\newblock {\em {{IEEE} {T}rans. {I}nf. {T}heory}}, vol. 52, no. 2, pp.
  489--509, 2006.

\bibitem{Donoho06}
D.~L. Donoho,
\newblock ``{Compressed sensing},''
\newblock {\em {{IEEE} {T}rans. {I}nf. {T}heory}}, vol. 52, no. 4, pp.
  1289--1306, Apr. 2006.

\bibitem{Rauhut10}
H.~Rauhut,
\newblock ``{C}ompressive {S}ensing and {S}tructured {R}andom {M}atrices,''
\newblock in {\em {T}heoretical {F}oundations and {N}umerical {M}ethods for
  {S}parse {R}ecovery}, {M}. {F}ornasier, Ed., vol.~9 of {\em {R}adon {S}eries
  {C}omp. {A}ppl. {M}ath.}, pp. 1--92. de{G}ruyter, 2010.

\bibitem{Candes11}
E.~J. Cand{\`e}s and Y.~Plan,
\newblock ``A probabilistic and ripless theory of compressed sensing,''
\newblock {\em {{IEEE} {T}rans. {I}nf. {T}heory}}, vol. 57, no. 11, pp.
  7235--7254, 2011.

\bibitem{Puy11}
G.~Puy, P.~Vandergheynst, and Y.~Wiaux,
\newblock ``On variable density compressive sampling,''
\newblock {\em {{IEEE} {S}ignal {P}rocessing {L}etters}}, vol. 18, no. 10, pp.
  595--598, 2011.

\bibitem{Juditsky11}
A.~Juditsky and A.~Nemirovski,
\newblock ``On verifiable sufficient conditions for sparse signal recovery via
  $\ell_1$ minimization,''
\newblock {\em Mathematical Programming Ser. B}, vol. 127, pp. 89--122, 2011.

\bibitem{Juditsky11b}
A.~Juditsky, F.K. Karzan, and A.~Nemirovski,
\newblock ``On low rank matrix approximations with applications to synthesis
  problem in compressed sensing,''
\newblock {\em SIAM J. on Matrix Analysis and Applications}, vol. 32, no. 3,
  pp. 1019--1029, 2011.

\bibitem{ledoux01}
M.~Ledoux,
\newblock ``The {C}oncentration of {M}easure {P}henomenon,''
\newblock {\em Amer. Mathematical Society}, vol. 89, 2001.

\bibitem{hastings1970montecarlo}
W.~K. Hastings,
\newblock ``{Monte Carlo sampling methods using Markov chains and their
  applications},''
\newblock {\em Biometrika}, vol. 57, no. 1, pp. 97--109, Apr. 1970.

\bibitem{Lezaud98}
P.~Lezaud,
\newblock ``Chernoff-type bound for finite {M}arkov chains,''
\newblock {\em Annals of Applied Probability}, vol. 8, no. 3, pp. 849--867,
  1998.

\bibitem{Horn91}
R.~Horn and C.~Johnson,
\newblock {\em {Topics in matrix analysis}},
\newblock Cambridge University Press, Cambridge, 1991.

\bibitem{Combettes11b}
P~L Combettes and J.-C Pesquet,
\newblock ``{Proximal Splitting Methods in Signal Processing},''
\newblock in {\em {Fixed-Point Algorithms for Inverse Problems in Science and
  Engineering}}, pp. 185--212. Springer, 2011.

\end{thebibliography}

}

\end{document}